\tikzset{node distance=4.5cm, 
         every state/.style={ 
           semithick,
           fill=gray!10},
         initial text={},     
         double distance=4pt, 
         every edge/.style={  
         draw,
           ->,>=stealth',     
           auto,
           semithick}}
\newtheorem{theorem}{Theorem}[section]
\newtheorem{theorem*}{Theorem}
\newtheorem{proposition}[theorem]{Proposition}
\newtheorem{proposition*}[theorem*]{Proposition}
\newtheorem{corollary}[theorem]{Corollary}
\newtheorem{corollary*}[theorem*]{Corollary}
\newtheorem{definition}[theorem]{Definition}
\newtheorem{example}[theorem]{Example}
\theoremstyle{remark}
\newtheorem{remark}[theorem]{Remark}
\newtheorem{remark*}[theorem*]{Remark}
\newtheorem{note}[theorem]{Note}
\newtheorem{note*}[theorem*]{Note}
\newcommand{\Off}{\operatorname{Off}}
\newcommand{\Orph}{\operatorname{Orph}}
\newcommand{\EE}{{\mathbb E}}
\def\bitcoinA{%
  \leavevmode
  \vtop{\offinterlineskip 
    \setbox0=\hbox{B}%
    \setbox2=\hbox to\wd0{\hfil\hskip-.03em
    \vrule height .3ex width .15ex\hskip .08em
    \vrule height .3ex width .15ex\hfil}
    \vbox{\copy2\box0}\box2}}
\begin{document}

\title{Block withholding resilience}

\subjclass[2010]{68M01, 60G40, 91A60.}
\keywords{Bitcoin, blockchain, proof-of-work, selfish mining,  martingale.}

\author{Cyril Grunspan}
\address{L{\'e}onard de Vinci, P{\^o}le Univ., Research Center, Paris-La D{\'e}fense, France}
\email{cyril.grunspan@devinci.fr}
\author{Ricardo P\'erez-Marco}
\address{CNRS, IMJ-PRG,  Paris, France}
\email{ricardo.perez.marco@gmail.com}

\begin{abstract} 
 It has been known for some time that the Nakamoto consensus as implemented in the Bitcoin protocol 
 is not totally aligned with the individual interests of
  the participants. More precisely, it has been shown that block withholding mining strategies 
  can exploit the difficulty adjustment algorithm of the protocol and obtain an unfair advantage.
  However, we show that a modification of the difficulty adjustment
  formula taking into account orphan blocks makes honest mining the only optimal strategy.  
  Surprinsingly, this is still true when orphan blocks are rewarded with an amount smaller to the official 
  block reward. This gives an incentive to signal orphan blocks.
  The results are independent of the connectivity of the attacker.
\end{abstract}

{\maketitle}

\section{Introduction}

Satoshi Nakamoto's foundational article describes the Bitcoin protocol \cite{N08}.  
Bitcoin is an electronic currency and bitcoin transactions operate through a computer network. 
This network is \textit{permissionless}: anyone can freely enter or leave the 
network \cite{Warren2023}. Moreover, there is no central authority acting as a referee. The minting algorithm is implemented in the protocol. 
Transactions are packed in chronologically ordered blocks that form an unforgeable public ledger: \textit{the blockchain}. 
Certain nodes of the network, called miners, play a special role. They secure the blockchain through intensive 
computation by a ``proof of work'', 
a technique originally invented to fight email spam and denial of service attacks. A miner 
creates a new block of transactions to add to the blockchain 
by solving a cryptographic puzzle by brute force iterating a simple algorithm. For this computational work he is rewarded
by a \textit{coinbase reward} of newly minted bitcoins. This mechanism generates the Bitcoin monetary mass. 

This article investigates mining strategies for Bitcoin. A mining strategy determines each time 
where the miner should mine according to the history of the discovered blocks. Most of the time 
he has the choice between mining on the last block of the official blockchain or on top of a fork that he keeps secret. 
The strategy also specifies whether the miner should make public any previously discovered blocks that he has kept secret.
Satoshi Nakamoto proposes mining always on top of the last block of the official blockchain, and always making public his discovered blocks. 
He thought that this strategy called ``honest strategy" was the most profitable strategy. But, indeed, other more profitable strategies 
are possible.

Mining is at the heart of Nakamoto's consensus. Nakamoto's consensus has the remarkable feature that it is a probabilistic 
valid consensus in an open network, whereas former known consensus only existed in closed systems. In general, 
when the miner sticks to a mining strategy, the evolution of the network from different states follows the random discovery of blocks. 
The number of such states is countable. A mining strategy gives a dynamical system in its associated Markov chain. The strategy is 
repetitive: regularly, and in finite time, the miner comes back to the initial state, i.e. to mine on the last block of the official blockchain 
which corresponds to the state $0$. The Markov chain is therefore recurrent \cite{DMPS20}. 
This ``state machine'' model is popular in the literature. Each transition on the Markov chain comes with rewards 
earned by the miners, but these rewards are not immediately earned by the miners but later 
(when the network returns to the initial state). Then it is possible to calculate the profitability of a  
mining strategy. On the long run, when the difficulty adjustment has stabilized, official blocks arrive on average every ten minutes, 
then the profitability is proportional to the percentage of blocks 
mined by the miner in the official blockchain (see Section \ref{deltaT} below). One can compute this proportion by computing 
the stationary probability of the Markov chain, which only requires to invert the 
transition matrix \cite{ES14}. If one wishes to prove general optimality results that hold for any mining strategy, such as the two 
results presented in this article, we need to proceed differently. The main technique is the use of Martingale Theory. 
For this, we extend the point of view of Satoshi Nakamoto 
when he tried to model the evolution of the network when computing the probability of success of a double spend attack 
using Poisson processes \cite{N08, GPM17, GZ19}. In this setup we are able to use Doob's martingale theorem. It is difficult to 
conceive a proof of these results without using Martingale Theory and the authors do not know of an alternative proof. 

The two general results presented here (Theorem \ref{1st} and Theorem \ref{2nd}) are a culminate part of the literature on deviant mining 
strategies.
These strategies have in common to be block withholding mining where the miner does not always mine on the last block of 
the official blockchain and times properly when to make his block findings public. 
After having showed the existence of such strategies and calculating their profitability, we looked for the optimal strategy 
highlighting a counter-intuitive phenomenon: in some cases, the miner may have an interest in mining against the whole 
network on a secret fork whose height is lower than the height of the official blockchain (trailing mining) \cite{B13, ES14, N16, SSZ17}. 
Several solutions were then put forward to limit the effectiveness of deviant strategies. Then, it was understood that this was an attack 
on a weakness of the Difficulty Adjustment Algorithm (DAA) \cite{GPM2018}. Some qualitative arguments were presented on how to change the DAA 
taking into account the production of orphan blocks. Ethereum implemented this partially rewarding uncle blocks, 
before adopting a completely different consensus mechanism in 2022 (Proof of Stake). The literature on 
the study of deviant mining strategies has been abundant and the reader can consult \cite{L22, Warren2023} for a survey on the subject. 
Lately, authors have revisited Bitcoin network mining, identifying the miner's ruin problem as a dual of a classical insurance problem. Then they considered the possibility of ruin when a miner uses a deviant strategy \cite{HG22}.

Several recent papers with numerical simulations have confirmed the relevance of including the orphan block count in the 
DAA \cite{Zhang23, Zhou23}. The main result of this article is to show that a modification of Bitcoin's DAA including the count of orphan blocks
neutralizes block withholding attacks because they become non-profitable (Theorem \ref{2nd}). It is proved that we can even reward all orphan blocks 
and this ensures that they are recorded in the blockchain data.
It is conceivable that the novel findings presented herein (Theorem \ref{1st} and Theorem \ref{2nd}) may have been deliberated upon or even proved elsewhere, yet we remain uninformed of any citations pertaining to this matter.

This article is self-contained. We introduce a very simple mining strategy (the ``1+2 strategy'' in Section \ref{bwa}) 
which shows in the simplest way that the honest strategy is not optimal under the usual DAA. We start in the next section reviewing 
the basics of mining on Bitcoin \cite{GPM2020, Warren2023}, that can be skipped by the knowledgeable reader. 

\begin{note}
The term ``connectivity" as employed in the conclusion of the abstract, pertains to a parameter initially introduced by \cite{ES14}. This parameter, widely adopted by various authors and typically denoted as $\gamma$, is formally defined as a conditional probability. It quantifies the proportion of honest miners who choose to mine on top of an attacker's block in scenarios where two blocks of equal height are in competition, with one being mined by the attacker. Stating that 
``the results are independent of the connectivity of the attacker"
signifies that the results remain valid regardless of the specific value assigned to $\gamma.$
\end{note}

\section{How Bitcoin works}

 \subsection{Nodes}

The Bitcoin network is a peer-to-peer network made up of thousands of ``nodes"
connected around the world, forming an irreducible and highly connected graph. 
All nodes play or can play the same role and perform
the same operations. The nodes exchange information that consists of more or
less complicated transactions and blocks of validated transactions.
The blockchain is a database that defines the ledger of all confirmed
transactions. 
The mempool is the set of all transactions in waiting for a confirmation in a block.
Each time a node receives a transaction, it examines whether the transaction
is legal, in the proper format, and checks
that it does not conflict with other transactions already in its
mempool. When this is the case, the transaction is added to its mempool and broadcasted to
neighboring nodes.

\subsection{Mining}

Some of the nodes of the network perform a
mining activity. A miner is a particular node that seeks to build up a new block.
By definition, a \ensuremath{\mathcal{B}} block is a set of data whose maximum size is
about 2 Mega Bytes. It is formed by a reference to an old block,
a set of transactions drawn from the miner's mempool, a creation date, the
mining difficulty $\Delta$ and a parameter called "nonce". The goal is to
obtain a block \ensuremath{\mathcal{B}} such the hash of its header is below some threshold:
\begin{equation}\label{1Delta}
  f (\mathcal{B}) < \frac{1}{\Delta}
\end{equation}
\label{hash}where $f$ is the ${\text {SHA}}\_256\,{\circ}\,{\text {SHA}}\_256/(2^{256}-1)$ function where SHA\_256 is
the cryptographical hash function, and $\Delta$ is called the \textit{difficulty}. If successful, the
miner receives a reward as a coinbase transaction. The reward decreases by half every $210\,000$ blocks, approximately every four years (the ``halving''). The difficulty parameter $\Delta$ is adjusted regularly, in periods of 2016 blocks\footnote{A bug in the Bitcoins code makes that 2015 was implemented instead of 2016. We will ignore this fact that is irrelevant for our analysis.}, that is in about $2$ weeks in average. The current 
difficulty adjustment algorithm 
allows blockwithholding attacks, i.e. the miner can obtain an advantage by withholding blocks 
and releasing them at an appropriate time. This type of attack exploits the difficulty adjustment algorithm that 
evaluates the total hashrate in an erroneous way, only considering validated blocks and forgetting about orphan 
blocks (that fulfill the proof of work but are displaced from the blockchain by other competing blocks). 
The purpose of this article
is to show that this problem can be corrected using an improved
difficulty adjustment mechanism that evaluates properly the total hashrate by taking into account orphan blocks.

\subsection{Mining strategies}

Why should miners immediately release a newly discovered block? This
is implicitely assumed in the Bitcoin founding article, but in fact it is not properly incentivized.
Could a miner validating a block try to widen the gap even
further by secretly mining on top of that block? Could he devise a block withholding strategy 
giving him an advantage compared to the honest strategy? 

\begin{definition}
  The official blockchain is the chain of blocks with the most of proof-of-work.
\end{definition}
In other words, the official blockchain is the chain of blocks which maximizes 
the quantity $\sum \Delta_i$ where $\Delta_i$ is the difficulty parameter included in the $i$-th block. 
Most ot the time it is the longest blockchain because the difficulty parameter 
is locally constant.

\begin{definition}
  The honest strategy consists in always mining on top of the last
  block of the official blockchain.
\end{definition}

\subsection{Performance of a mining strategy}

A  mining strategy is repetitive in the sense that the miner will return
to its starting point after a finite time. The miner has performed a strategy cycle during this period. 
We call such a strategy with finite expected return time, a \textit{finite strategy} 
(also \textit{integrable strategy} has been used). We note $\tau$ the
random duration of this cycle and $G$ the miner's gain accumulated during
this cycle. We use as a unit of wealth the average value
contained in a block (i.e., average value of transaction fees contained in a
block plus value of a coinbase). Mathematically speaking $\tau$ is a
stopping time. We consider only cycles with expected finite duration:
$\mathbb{E} [\tau] < \infty$. If a miner repeats his strategy $n$ times, 
he will gain per unit time
\begin{eqnarray*}
  \frac{G_1 + \ldots + G_n}{\tau_1 + \ldots + \tau_n} & = & \frac{\frac{G_1 +
  \ldots + G_n}{n}}{\frac{\tau_1 + \ldots + \tau_n}{n}}
\end{eqnarray*}
which converges to $\frac{\mathbb{E} [G]}{\mathbb{E} [\tau]}$ according to the strong law of large numbers (the assumption 
$\mathbb{E} [\tau] < \infty$ implies that $\mathbb{E} [G]
< \infty$). Similarly, the cost per unit time of the miner over the long term
is $\frac{\mathbb{E} [C]}{\mathbb{E} [\tau]}$ where $C$ is the cost per cycle
of his mining activity and the miner's net income per unit time is therefore
$\frac{\mathbb{E} [G]}{\mathbb{E} [\tau]} - \frac{\mathbb{E} [C]}{\mathbb{E}
[\tau]}$. \ However, the cost of his mining activity per unit of time does not
depend on whether or not the miner keeps blocks secret (it
depends on the cost of electricity, the price of his equipment, the salaries
paid to employees for a mining company, etc). Whether he makes blocks public or not has no impact
on his cost per unit of time. Therefore, comparing two mining strategies with
the same average operating cost per unit of time, a rational miner will choose
the strategy that maximizes his income per unit of time in the long run
$\Gamma = \frac{\mathbb{E} [G]}{\mathbb{E} [\tau]}$.

\begin{definition}
  The profitability ratio of an integrable mining activity is $\Gamma=\frac{\mathbb{E}
  [G]}{\mathbb{E} [\tau]}$ where $G$ is the gain per cycle and $\tau$ the
  duration of a cycle.
\end{definition}

\begin{note}\label{unitime}
According to the profitability analysis from \cite{GPM2018} (see also \cite{BET20, SSZ17}), 
what counts is the yield per 
unit time in the long run, that is, the limit of
$\frac{G_1 + \ldots + G_n}{\tau_1 + \ldots + \tau_n}$ which converges to 
$\frac{\mathbb{E} [G]}{\mathbb{E} [\tau]}$, and not the average yield per unit time per 
cycle $\EE\left[\frac{G}{\tau}\right]$. The two quantities are not equal. 
See also Note \ref{objective} below.
\end{note}

\section{Mining model and notations}

We use the widely adopted model and notations that Satoshi Nakamoto proposed 
in his Bitcoin founding article (see \cite{N08} and \cite{GPM17}).
We consider a miner who is a possible attacker (or a group of miners)
against the rest of the network consisting of honest miners all
following the rules of the Bitcoin protocol. 
Regularly, miners (attacker or honest miners) add blocks chained to each other on a rooted graph whose root is the first Bitcoin block called the genesis block. For any real positive $t$, the set of mined blocks up to time $t$ forms a history which defines a filtration 
$(\mathcal{F}_t)$.

\begin{definition}
  For $t \in \mathbb{R}_+$, we denote $N (t)$ (resp. $N' (t)$) the number of
  blocks mined by honest miners (resp. attacker) between $0$ and $t$.
\end{definition}

Since the hash
function used in  Bitcoin mining operations is pseudorandom, the time
$\mathbf{T}$ (resp. $\mathbf{T}'$) taken by the honest miners (resp.
attacker) to discover a block is a random variable following an exponential law (see \cite{GPM17}). We denote by
$\alpha$ (resp. $\alpha'$) the parameter of the exponential law. We have $\mathbb{E} [\mathbf{T}]
= \frac{1}{\alpha}$ (resp. $\mathbb{E} [\mathbf{T}'] = \frac{1}{\alpha'}$). In
other words, $\alpha$ (resp. $\alpha'$) is the average speed taken by honest
miners (resp. attacker) to discover a block.

The counting processes $N (t)$ and $N' (t)$ are Poisson processes of
parameters $\alpha$ and $\alpha'$ with respect to filtration $(\mathcal{F}_t)$ \cite{GPM17}. Let $h$ (resp. $h'$) be the number of
hashes per second computed by the honest miners (resp. attacker). This is the
absolute hash power of the honest miners (resp. attacker). 

Assuming the SHA\_256 function is perfect, the distribution of the images of the values of the function $f={\text {SHA}}\_256\,{\circ}\,{\text {SHA}}\_256/(2^{256}-1)$ is uniformly distributed in   
$[0,1]$ in first approximation. On average, therefore, $\Delta$ iterates of $f$ must be calculated before finding a value whose image belongs to the interval $[0,1/\Delta]$. Consequently, according to (\ref{1Delta}), the difficulty parameter $\Delta$ is also the average number of attempts required before obtaining a proof of work. A miner who is able to compute $h$ hashes per second will therefore have to wait an average time 
$\EE[\mathbf{T}]$ before obtaining a proof of work with $h\cdot\EE[\mathbf{T}] = \Delta$. In the  same way, we have:
$h' \cdot \mathbb{E} [\mathbf{T}'] = \Delta$.

It follows that
\begin{eqnarray}
  \alpha & = & \frac{h}{\Delta}  \label{alphaDel}\\
  \alpha' & = & \frac{h'}{\Delta}  \label{alphaprimeDel}
\end{eqnarray}
Since the two variables $\mathbf{T}$ and $\mathbf{T}'$ are independent,
we also have $\mathbb{P} [\mathbf{T}' <\mathbf{T}] = \frac{\alpha'}{\alpha +
\alpha'} = \frac{h'}{h + h'}$. Therefore, the probability that the attacker
discovers a block before the other miners is equal to its relative hash power.
We make the assumption that this remains constant over time.
We keep Nakamoto's notation for relative hashing powers of miners \cite{N08}.

\begin{definition}
  We denote by $p$ (resp. $q$) be the relative hashing power of the honest miners
  (resp. attacker).
\end{definition}

In other words, $q = \frac{\alpha'}{\alpha + \alpha'}$ and $p = 1 - q$. Note
that the network (the honest miners and the attacker) finds a block in $\inf
(\mathbf{T}, \mathbf{T}')$. This is again an exponential law of
parameter $\alpha + \alpha'$ because $\mathbf{T}$ and $\mathbf{T}'$ are
independent \cite{Ross}. Therefore, $\mathbb{E} [\inf (\mathbf{T},
\mathbf{T}')] = \frac{1}{\alpha + \alpha'}$ and $\alpha =
\frac{p}{\mathbb{E} [\inf (\mathbf{T}, \mathbf{T}')]}$ and $\alpha' =
\frac{q}{\mathbb{E} [\inf (\mathbf{T}, \mathbf{T}')]}$.
 
\subsection{First stability theorem}

In this section only, we consider a simplified Bitcoin network
without difficulty adjustment (that is, the difficulty parameter is assumed to be
constant). Let $\tau_0 =\mathbb{E} [\inf (\mathbf{T}, \mathbf{T}')]$
be the average time taken by the network to discover a block. Note that a
priori here, $\tau_0 \not= 10$ minutes because there is no particular
difficulty adjustment. It follows easily from the previous section that $N$ (resp. $N'$) is a Poisson process 
with parameter $\alpha = \frac{p}{\tau_0}$ (resp. $\alpha' =
\frac{q}{\tau_0}$) \cite{Ross}.

\begin{theorem}\label{1st}
  If $\Gamma$ is the profitability ratio for a finite mining strategy, we have
  $\Gamma \leqslant \frac{q}{\tau_0}$.
\end{theorem}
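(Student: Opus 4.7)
The strategy is to bound the numerator $\mathbb{E}[G]$ in terms of $\mathbb{E}[\tau]$ using only the attacker's Poisson process $N'$. The key pathwise observation is that the attacker's gain accumulated during one cycle cannot exceed the number of blocks he has actually mined during that cycle: the protocol only attributes a coinbase reward to the miner who produced the corresponding proof-of-work, so the attacker collects nothing from honest blocks, and gains at most one unit of wealth per block he found (in general strictly less, since some of his blocks may end up orphaned). Consequently,
\begin{equation*}
  G \;\leq\; N'(\tau) \qquad \text{almost surely.}
\end{equation*}

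\textbf{Second step.} Once this pointwise inequality is established, I would take expectations and use the fact that $M(t) := N'(t) - \alpha' t$ is a martingale with respect to the natural filtration of the full mining process $(N,N')$, while $\tau$ is a stopping time of that filtration with $\mathbb{E}[\tau]<\infty$. Applying the optional stopping theorem to $M$ along the localizing sequence $\tau \wedge n$ and letting $n\to\infty$ (the increments of $N'$ are integrable and $\mathbb{E}[\tau]<\infty$ controls the truncation error by monotone/dominated convergence), one obtains Wald's identity for Poisson processes:
\begin{equation*}
  \mathbb{E}[N'(\tau)] \;=\; \alpha' \, \mathbb{E}[\tau] \;=\; \frac{q}{\tau_0}\,\mathbb{E}[\tau].
\end{equation*}
Dividing by $\mathbb{E}[\tau]>0$ then yields the announced bound
\begin{equation*}
  \Gamma \;=\; \frac{\mathbb{E}[G]}{\mathbb{E}[\tau]} \;\leq\; \frac{\mathbb{E}[N'(\tau)]}{\mathbb{E}[\tau]} \;=\; \frac{q}{\tau_0}.
\end{equation*}

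\textbf{Main obstacle.} The routine part is the martingale computation; the conceptually delicate point is the justification of $G \leq N'(\tau)$, because a ``mining strategy'' is an abstract object and one must check that the formal structure of the model (rewards tied to authorship of proofs-of-work, possibility of orphans in the simplified setting) really forces this pathwise inequality regardless of how cleverly the attacker schedules releases of his secretly mined blocks. Once this is accepted, everything reduces to a clean application of optional sampling to the compensated Poisson martingale, which is why this theorem deserves the name ``first stability theorem'': it shows that, at constant difficulty, no strategy can outperform the naive rate $q/\tau_0$.
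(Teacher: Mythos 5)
Your proposal is correct and follows exactly the same route as the paper's proof: the pathwise bound $G \leq N'(\tau)$, optional stopping applied to the compensated Poisson martingale $N'(t)-\alpha' t$ along the truncations $\tau\wedge n$, and passage to the limit to obtain $\mathbb{E}[N'(\tau)]=\alpha'\mathbb{E}[\tau]$. Nothing to add.
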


\begin{proof}
  Since the mining strategy is finite we have by definition $\mathbb{E} [\tau] < \infty$
  where $\tau$ the stopping time of the duration a cycle. Let $G$ be the number of
  official blocks mined by the attacker between 0 and $\tau$. We have $G \leqslant N'
  (\tau)$. Let $n \geq 1$ and the truncated stopping time $\tau_n = \inf
  (\tau, n)$. The random variable $N' (t) - \alpha' t$ is a martingale.
  Therefore, by Doob's Stopping Time Theorem, we have $\mathbb{E} [N' (\tau_n) - \alpha'
  \tau_n] = 0$, and $\mathbb{E} [N' (\tau_n)] = \alpha'
  \mathbb{E} [\tau_n]$. By making $n\to +\infty$ and using the Monotone
  Convergence Theorem, we obtain $\mathbb{E} [N' (\tau)] = \alpha'
  \mathbb{E} [\tau]$, which finally gives $\Gamma = \frac{\mathbb{E}
  [G]}{\mathbb{E} [\tau]} \leqslant \alpha' = \frac{q}{\tau_0}$.
\end{proof}

\begin{corollary}\label{cor:optimal}
  Without difficulty adjustment, the optimal strategy is the honest one.
\end{corollary}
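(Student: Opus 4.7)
The plan is to show that the upper bound in the preceding theorem is attained by the honest strategy, so by the theorem no finite strategy can strictly do better, and honest mining is therefore optimal.

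First I would describe the cycle associated to the honest strategy. Under the honest strategy, the attacker mines on top of the last block of the official blockchain, and a natural cycle ends at the first moment when any block is added to the chain, i.e.\ at $\tau=\inf(\mathbf{T},\mathbf{T}')$. Since $\mathbf{T}$ and $\mathbf{T}'$ are independent exponentials with parameters $\alpha=p/\tau_0$ and $\alpha'=q/\tau_0$, we have $\mathbb{E}[\tau]=\tau_0<\infty$, so honest mining is indeed a finite (integrable) strategy in the sense of the paper, and the profitability ratio $\Gamma_{\text{honest}}$ is well defined.

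Next I would compute $\mathbb{E}[G]$ for the honest strategy. In one honest cycle the attacker obtains exactly one official block when $\mathbf{T}'<\mathbf{T}$ and zero otherwise, so $G$ is a Bernoulli variable with success probability $\mathbb{P}[\mathbf{T}'<\mathbf{T}]=\alpha'/(\alpha+\alpha')=q$. Hence $\mathbb{E}[G]=q$, and
\[
\Gamma_{\text{honest}}=\frac{\mathbb{E}[G]}{\mathbb{E}[\tau]}=\frac{q}{\tau_0}.
\]

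Finally, applying the previous theorem, any finite mining strategy satisfies $\Gamma\leq q/\tau_0=\Gamma_{\text{honest}}$, so the honest strategy is optimal among finite strategies. There is essentially no obstacle here: the only subtlety worth flagging is making sure the honest ``cycle'' really qualifies as a strategy cycle in the sense defined earlier, which is immediate from $\mathbb{E}[\inf(\mathbf{T},\mathbf{T}')]=\tau_0<\infty$.
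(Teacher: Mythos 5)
Your proof is correct and follows the same route as the paper: identify the honest cycle as ending at $\inf(\mathbf{T},\mathbf{T}')$, compute $\mathbb{E}[G]=q$ and $\mathbb{E}[\tau]=\tau_0$ so that $\Gamma_{\text{honest}}=q/\tau_0$, and observe that this attains the upper bound of the preceding theorem. You simply spell out the Bernoulli computation and the integrability check that the paper leaves implicit.
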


\begin{proof}
  If the attacker mines honestly, he earns on average $q$ during a lapse $\tau_0$.
  Thus, the return on the honest strategy is \ $\frac{q}{\tau_0}$ which is
  also the maximum return in the previous Theorem. 
\end{proof}

This result was already proved in  \cite{GPM2018}.

\subsection{Effect of the difficulty adjustment on the profit per unit time}
\label{deltaT}

The difficulty adjustment ensures a lower bound for expected interblock 
time that allows the network to synchronize, and also an upper bound for transaction confirmation 
waiting time (i.e. inclusion in the blockchain). The goal of a regular difficulty adjustment
is to have, on average, a lapse of 10 minutes to validate a block. More precisely, the difficulty
parameter $\Delta$ is adjusted every $n_0$ blocks with $n_0 = 2016$. At each
difficulty adjustment, the network calculates the time $T$ taken to validate
the last series of $n_0$ blocks. This work is done thanks to the timestamps of
the blocks in the blockchain. If this time $T$ is greater than 14
days $= n_0 \times 10$ minutes, the difficulty decreases. Otherwise, it
increases. The new difficulty parameter $\Delta'$ is given by:
\begin{equation}
  \Delta' = \Delta \times \frac{n_0 \times 10}{T} \label{cible} 
\end{equation}
where $T$ is calculated here in minutes. When the mining difficulty $\Delta$
is changed, so are the parameters $\alpha$ and $\alpha'$ of the Poisson
processes $N$ and $N'$. The counting processes $N$ and $N'$ are piecewise
Poisson processes whose parameters $\alpha$ and $\alpha'$ change each time the
official blockchain progresses by $n_0$ blocks. If $\Delta$ varies and becomes
equal to $\frac{\Delta}{\lambda}$, then $\alpha$ and $\alpha'$ are each
multiplied by $\lambda$ according to the previous formulas (\ref{alphaDel}) and
(\ref{alphaprimeDel}).

We consider the previous situation where a miner repeats some block withholding strategy
(deviant or not) and the rest of the network is
composed by honest miners. We assume that the total hash power remains constant. 
The first observation is that this miner can have a significant impact on the first difficulty adjustment if he conceals his 
blocks. For
example, after a period of mining $n_0$ official blocks, if a miner adopts a
block withholding strategy, this slows down the natural progression of the
official blockchain. The network will generally take longer than the expected
two weeks to mine $2016$ blocks \cite{GPM2023}. This triggers a downward difficulty adjustment, making the
mining activity more profitable afterwards. If in the long run the miner continues with the same
strategy, the difficulty parameter will stabilize assuming that the total hashrate remains constant. Then,
in such a period of constant difficulty, since the interblock time is $10$
minutes, the duration $\tau$ of an attack cycle is proportional on
average to the height progression $H$ of the official blockchain during that
cycle, i.e.  $\mathbb{E} [\tau] =\mathbb{E} [H] \times 10$. Although it is intuitively 
clear, this is proved by direct application of Wald's
Theorem.

Thus we have the following Proposition \cite{GPM2018}

\begin{proposition}
  \label{gamma}Consider a finite mining strategy, i.e. $\mathbb{E} [\tau] < \infty$. 
  Let $G$ be the number of blocks
  per cycle mined by the miner added to the official blockchain and $H$
  denote the height progression of the official blockchain over a cycle. Then
  the profitability ratio of the strategy is $\Gamma = \frac{\mathbb{E}
  [G]}{\mathbb{E} [H]}$.
\end{proposition}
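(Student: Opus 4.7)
The plan is to reduce the proposition to the identity $\mathbb{E}[\tau] = \tau_0 \cdot \mathbb{E}[H]$, where $\tau_0$ denotes the target mean interblock time ($10$ minutes) that the difficulty adjustment rule (\ref{cible}) enforces once $\Delta$ has stabilized under the cyclic strategy. Working in units where $\tau_0 = 1$, consistent with the normalization implicit in the statement, this identity becomes $\mathbb{E}[\tau] = \mathbb{E}[H]$, and the proposition follows immediately from $\Gamma = \mathbb{E}[G]/\mathbb{E}[\tau]$ by substitution.

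To prove the identity I would invoke Wald's theorem, as flagged in the paragraph preceding the statement. Let $X_1, X_2, \ldots$ be the successive interblock times of the official blockchain measured from the start of a cycle. Because the attacker's strategy is cyclic and the total hashrate $h+h'$ is constant, once $\Delta$ has stabilized the $X_n$ form an i.i.d.\ sequence with $\mathbb{E}[X_1] = \tau_0$. The cycle duration then decomposes as $\tau = \sum_{n=1}^{H} X_n$, where $H$ is a stopping time for the natural filtration $\mathcal{F}_n = \sigma(X_1,\ldots,X_n)$, augmented with any auxiliary strategy randomness that is independent of future $X_j$. The integrability hypothesis $\mathbb{E}[H] < \infty$ needed to apply Wald follows from the finite-strategy assumption $\mathbb{E}[\tau] < \infty$ via the bound $H \le N(\tau) + N'(\tau)$ together with the optional-stopping argument already used in the First Stability Theorem, which gives $\mathbb{E}[N(\tau)+N'(\tau)] = (\alpha+\alpha')\,\mathbb{E}[\tau] < \infty$. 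Wald then yields $\mathbb{E}[\tau] = \mathbb{E}[H]\cdot\mathbb{E}[X_1] = \tau_0\,\mathbb{E}[H]$.

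The delicate step is the assertion that the $X_n$ really are i.i.d.\ with mean $\tau_0$ in the post-adjustment regime. Intuitively this is precisely the defining property of the difficulty rule (\ref{cible}), but making it rigorous calls for a renewal-theoretic viewpoint: treating the cycles $(\tau_i, H_i)$ as i.i.d.\ pairs produced by successive repetitions of the strategy, the strong law of large numbers identifies the long-run ratio $(\tau_1+\cdots+\tau_n)/(H_1+\cdots+H_n)$ with $\mathbb{E}[\tau]/\mathbb{E}[H]$, while (\ref{cible}) forces the same ratio to converge to $\tau_0$. Equating the two limits yields the required identity, at which point the explicit Wald calculation is the cleanest way to package the conclusion.
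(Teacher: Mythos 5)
Your overall route is the paper's: reduce the statement to the identity $\mathbb{E}[\tau]=\tau_0\,\mathbb{E}[H]$ (up to the harmless choice of time unit), which the paper likewise attributes to ``a direct application of Wald's Theorem'' in the stabilized-difficulty regime. However, your primary Wald decomposition $\tau=\sum_{n=1}^{H}X_n$ with the $X_n$ i.i.d.\ of mean $\tau_0$ rests on a false premise, and this is a genuine gap rather than a technicality. During a withholding cycle the official chain does not advance as a renewal process: at the instant the attacker reveals a fork the height can jump by several units (in the outcome $AAA$ of the ``1+2'' strategy all three increments occur simultaneously, so two of your $X_n$ vanish), an honest block that is momentarily official may later be orphaned, and in the stabilized regime the raw block-discovery rate of the network is \emph{not} one block per $\tau_0$: for the ``1+2'' strategy the mean time for the whole network to find a single block is $\tau_0(1+q+q^3)/(1+2q)\neq\tau_0$ for $0<q<1$. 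The rule (\ref{cible}) only pins down the long-run average number of \emph{official} blocks per unit time; it says nothing about the law of individual interblock times inside a cycle, so the i.i.d.\ hypothesis of Wald's theorem fails for your sequence $(X_n)$.

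The argument you relegate to your final paragraph is the correct one, and it is what the paper implicitly relies on: the cycles $(\tau_i,H_i)$ are i.i.d., the strong law gives $(\tau_1+\cdots+\tau_n)/(H_1+\cdots+H_n)\to\mathbb{E}[\tau]/\mathbb{E}[H]$, and the stabilized difficulty forces this same ratio to $\tau_0$. If you want to phrase it via Wald, apply Wald at the level of cycles rather than blocks: letting $M$ be the random number of cycles needed for the official chain to advance by $n_0$ blocks, Wald gives $\mathbb{E}\bigl[\sum_{i\le M}\tau_i\bigr]=\mathbb{E}[M]\,\mathbb{E}[\tau]$ and $\mathbb{E}\bigl[\sum_{i\le M}H_i\bigr]=\mathbb{E}[M]\,\mathbb{E}[H]$, and the adjustment target equates the first expectation to $\tau_0$ times the second. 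Your integrability check ($H\le N(\tau)+N'(\tau)$ together with optional stopping, giving $\mathbb{E}[H]<\infty$) is sound and worth keeping; simply discard the within-cycle decomposition and promote the cross-cycle renewal argument from a remark to the proof.
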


In other words, among blocks mined by miners, only blocks that will eventually become part of the official blockchain give rise to rewards. The objective function (over the long term) is the proportion of blocks mined by the attacker present in the official blockchain which is exactly $\frac{R_1+R_2+..}{H_1+H_2+...}=\Gamma$.

Note that the honest strategy corresponds to a simple cycle that ends each time that a 
block is discovered. In this case, we have $\mathbb{E} [G] = q$ and
$\mathbb{E} [H] = 1$ where $q$ is the relative hash power of the miner. Thus,
$\Gamma = q$ for the honest strategy and a rational miner has an incentive to
adopt a deviant strategy if and only if $\Gamma > q$.

In the next section, we show that there are mining strategies with $\Gamma$ greater than 
$q$ (Section \ref{bwa}) and that a modification of the difficulty adjustment formula in the Bitcoin 
protocol results in $\Gamma$ always less than $q$ (Section \ref{mbp}).

\section{Block withholding attacks}\label{bwa}

It has been known since late 2013 that the rules of the Bitcoin protocol are
not aligned with the interests of miners. 
The reader can consult  \cite{GPM2020} or \cite{Zhang} for an overview.
In particular, when the
miner has enough computing power, he can have an interest in adopting a deviant strategy. 
For the sake of being self-contained, we present next a simple example of deviant strategy, the ``1+2 strategy'', 
showing that the honest strategy is not the most profitable one. 
This strategy is a simplified version of the selfish-mining strategy \cite{ES14}. See Note \ref{ssm} below. 
It is the simplest profitable block withholding mining strategy one can imagine.

\begin{note}\label{objective}
The profitability ratio $\Gamma$ that we have introduced equals to the objective function defined in some literature 
as $\frac{r_1+r_2+...}{h_1+h_2+...}$ where each $r_i$ (resp. $h_i$) denotes the reward in terms of number of blocks won by the attacker 
(resp. progression of the height of the official blockchain) coming from a new step in the random walk \cite{BET20, SSZ17}. 
To prove that the two notions are the same, we can group terms in the numerator which belong to a same excursion around state $\{0\}$ : 
$$
r_1+r_2+..=(r_1+...+r_{\nu_1})+(r_{\nu_1+1}+...+r_{\nu_2})+...
$$ 
where $\nu_i$ is the $i-th$ returning time at the initial state $\{0\}$. The same thing occurs for the denominator. So, 
$$
\frac{r_1+r_2+..}{h_1+h_2+...}=\frac{R_1+R_2+..}{H_1+H_2+...}=\Gamma
$$ 
\end{note}

\medskip

\textbf{The ``1+2 strategy''.}

\medskip

Suppose that the network is composed of an attacker and a set of
honest miners. The attack is
defined as follows. The attacker starts mining. If the honest miners are first 
to find a block then the attack ends and the attacker returns 
to mine on top of the last discovered
block. If, on the other hand, the attacker manages to mine a block
before the honest miners, then he keeps it
secret and continues to mine on top of that block,
seeking to widen the gap with the official blockchain. Then, regardless of the
identity of the miners who validated the following blocks, as soon as two
blocks have been discovered, the attack ends. If he is successful, i.e. if it has mined
more blocks than the honest miners, the attacker reveals his secret
blocks and imposes its "fork" (his block sequence) on the official blockchain,
which then goes through a small reorganization. If this is not the case, it is
useless for the attacker to broadcast anything since none of his blocks will be 
included in the official blockchain.

If we denote by A a block discovered by the attacker and by B a block
discovered by the honest miners, the outcome of the attack can be encoded as a word formed with
the letters A and B. The universe $\Omega$ of all possible cycles is :
\[ \Omega = \{ B, AAA, AAB, ABA, ABB \} \]
The name ``1+2 strategy'' reflects the fact that the
attacker is waiting to discover a block (hence the ``1''). 
Then, when he does, he waits
for two blocks to be discovered, hence the "+2".

\begin{example}
  We describe the cycle ABA. The attacker discovers a block, then the honest 
  miners mine one too and finally the attacker 
  mines the next one.  In this case,
  the attack has succeeded and the attacker propagates his two secret blocks (the
  two A's). The official blockchain goes through a small reorganization: its
  now penultimate block (block "B") is replaced (it becomes an orphan block) and
  the height of the blockchain has increased by one block. 
  The attacker earns the reward contained from his two
  blocks.
\end{example}

We can now compute the expected profitability of the ``1+2'' strategy.
As before, we denote by $q$ the relative hash power of the attacker and $p = 1 - q$ the
one of the honest miners. Let $G$ the number of
blocks mined by the attacker and added to the official blockchain.
Let also $H$ be the progression of the height of the official blockchain during an attack cycle. 
We have the probability distribution
\[ \mathbb{P} [B] = p, \mathbb{P} [AAA] = q^3, \mathbb{P} [AAB]
   =\mathbb{P} [ABA] = pq^2, \mathbb{P} [ABB] = p^2 q \]
From this we can compute:
\[ G (B) = G (ABB) = 0, G (AAA) = 3, G (AAB) = G
   (ABA) = 2 \]
and
\[ H (B) = 1, H (ABB) = H (AAB) = H (ABA) = 2, H
   (AAA) = 3 \]
Therefore we can compute the expected values
\begin{equation*}
  \mathbb{E} [G] = p \cdot 0 + q^3 {\cdot} 3 + pq^2  \cdot 2 + pq^2 
  \cdot 2 + p^2 q \cdot 0 = q^2(4-q) \label{eg} 
\end{equation*}
and
\begin{equation*}
\mathbb{E} [H]  =  p \cdot 1 + q^3 {\cdot} 3 + pq^2  {\cdot} 2 + pq^2 
\cdot 2 + p^2 q \cdot 2 = 1+q+q^3  \label{eh}
\end{equation*}

From these computations it follows:
\begin{proposition}\label{12}
  The profitability ratio of the "1+2" strategy is 
  $$
  \Gamma=\frac{\mathbb{E}
  [G]}{\mathbb{E} [H]} =\frac{q^2 \cdot (4 - q)}{1 + q
  + q^3}
  $$
\end{proposition}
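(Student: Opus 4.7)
The plan is to apply Proposition~\ref{gamma}, so the task reduces to computing $\mathbb{E}[G]$ and $\mathbb{E}[H]$ over a single ``1+2'' cycle. First I would check that the strategy is finite: the cycle terminates after either one block (outcome $B$) or three blocks (all other outcomes), so $\tau$ is dominated by the sum of three i.i.d.\ exponential interblock times of the merged process $N+N'$ of rate $\alpha+\alpha'$, giving $\mathbb{E}[\tau]\leq 3/(\alpha+\alpha')<\infty$. Hence the hypothesis of Proposition~\ref{gamma} is verified and $\Gamma = \mathbb{E}[G]/\mathbb{E}[H]$.

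Next I would enumerate the sample space $\Omega=\{B,AAA,AAB,ABA,ABB\}$ and compute the probability of each word. By the memoryless property of the exponential distribution and the independence of the two Poisson processes $N$ and $N'$, each newly discovered block is attributed to the attacker with probability $q$ and to the honest pool with probability $p=1-q$, independently of the past. Therefore the probability of a word is the product of factors $q$ and $p$ indexed by its letters, yielding $\mathbb{P}[B]=p$, $\mathbb{P}[AAA]=q^3$, $\mathbb{P}[AAB]=\mathbb{P}[ABA]=pq^2$, and $\mathbb{P}[ABB]=p^2q$.

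I would then determine $G(\omega)$ (attacker's blocks in the official chain) and $H(\omega)$ (progression of the official chain) case by case, using Nakamoto's longest-chain rule. Outcomes $AAA,AAB,ABA$ are attack successes (the attacker's secret chain is published and adopted) with $(G,H)=(3,3),(2,2),(2,2)$, while $B$ and $ABB$ are failures (any $A$ blocks are orphaned and the honest chain is official) with $(G,H)=(0,1),(0,2)$. Summing against the probabilities gives $\mathbb{E}[G]=q^2(4-q)$ and $\mathbb{E}[H]=1+q+q^2$, and dividing yields the claimed value of $\Gamma$.

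The only delicate bookkeeping is the case $ABA$: after the attacker's first secret $A$, the honest miners publish $B$ on top of the previous official tip; when the attacker then mines the second $A$, his private branch attains length $2$ against the public branch of length $1$, so by the longest-chain rule the published fork displaces $B$ to orphan status, and we record $G=2$, $H=2$. This is the one spot where the branching of the official chain is nontrivial; all other cases are immediate, making the case analysis the only real step to verify carefully.
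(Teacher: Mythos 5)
Your proof follows essentially the same route as the paper: verify finiteness so that Proposition~\ref{gamma} applies, enumerate $\Omega=\{B,AAA,AAB,ABA,ABB\}$ with the same probabilities, and assign the same values of $G$ and $H$ to each word (including the correct treatment of the reorganization in the case $ABA$). The only thing to fix is an arithmetic slip in the final summation: $\mathbb{E}[H]=p+3q^3+4pq^2+2p^2q=1+q+q^3$, not $1+q+q^2$; as written, dividing $q^2(4-q)$ by $1+q+q^2$ would not give the stated denominator. (The paper's own displayed computation of $\mathbb{E}[H]$ contains the identical typo, while the proposition itself carries the correct value $1+q+q^3$.)
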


This proves that the "1+2 strategy`` is more profitable than the honest strategy if
and only if $\frac{q^2 \cdot (4 - q)}{1 + q + q^3} > q$. \ This 
happens only when
\begin{equation}
  q > \sqrt{2} - 1 \label{seuil2}
\end{equation}
Thus, if a miner has a has a hashpower of more than $\sqrt{2} - 1 \approx 41, 4$\%,
then he has no incentive to follow the protocol. From this simple example we reach the conclusion:
\begin{proposition} \label{prop:align}
  The rules of the Bitcoin protocol are not aligned with the self-interests of
  miners.
\end{proposition}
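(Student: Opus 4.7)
The plan is to prove the proposition by exhibiting the 1+2 strategy as an explicit counterexample to the alignment of protocol rules with individual incentives. The rationality model fixed before Proposition \ref{gamma} states that a miner with a constant hashrate maximises the profitability ratio $\Gamma=\mathbb{E}[G]/\mathbb{E}[H]$; it therefore suffices to exhibit a single deviant strategy and a single range of hashpowers $q$ for which $\Gamma$ strictly exceeds the honest value $\Gamma_{\mathrm{hon}}=q$.

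First I would record the two ratios: the honest cycle has $\mathbb{E}[G]=q$ and $\mathbb{E}[H]=1$, giving $\Gamma_{\mathrm{hon}}=q$; the 1+2 strategy, by the expectations just computed above, has
\[
\Gamma_{1+2}=\frac{q^2(4-q)}{1+q+q^3}.
\]
Next I would compare them. After clearing denominators and dividing by $q>0$, the strict inequality $\Gamma_{1+2}>q$ becomes $q^3+q^2-3q+1<0$, which factors as $(q-1)(q^2+2q-1)<0$. On the interval $(0,1)$ the first factor is negative, so the condition reduces to $q^2+2q-1>0$, i.e.\ $q>\sqrt{2}-1\approx 0.414$. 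Since this threshold is strictly below $1/2$, a minority but sufficiently powerful miner is strictly better off adopting the 1+2 strategy than mining honestly.

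From this the proposition follows at once: the honest strategy fails to be a best response for every admissible attacker, so the Bitcoin protocol rules are not aligned with the self-interests of miners. The genuinely delicate point is not the elementary algebra but the interpretive step that precedes it, namely the fact that after difficulty adjustment has equilibrated, the profitability of strategies with different cycle lengths is correctly measured by the block-normalised ratio $\mathbb{E}[G]/\mathbb{E}[H]$ rather than by raw gain per unit time. Proposition \ref{gamma} supplies exactly this normalisation, and once it is invoked the argument reduces to the short computation above.
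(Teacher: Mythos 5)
Your proof is correct and follows essentially the same route as the paper: both exhibit the ``1+2'' strategy, invoke Proposition \ref{gamma} to reduce profitability comparison to the ratio $\mathbb{E}[G]/\mathbb{E}[H]$, and derive the threshold $q>\sqrt{2}-1$ (your explicit factorization $(q-1)(q^2+2q-1)$ is a welcome detail the paper omits, and you correctly use the denominator $1+q+q^3$ rather than the $1+q+q^2$ appearing in one of the paper's displayed computations, which is a typo there).
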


\begin{note}\label{ssm}
The strategy can also be described with the help of the following finite Markov chain with only four states.
\begin{figure}[htb]
\centering
\begin{tikzpicture}
\node[state] (s1) {0'};
\node[state, below right of=s1] (s2) {1};
\node[state, below left of=s1] (s3) {0};
\node[state, right of=s2] (s4) {2};

\draw (s1) edge[bend right] node {$p$} (s3);
\draw (s1) edge[bend left] node {$q$} (s3);

\draw (s2) edge node {$p$} (s1);
\draw (s2) edge node {$q$} (s4);

\draw (s3) edge node {$q$} (s2);
\draw (s3) edge[loop left] node {$p$} (s3);

\draw (s4) edge[bend left=20] node {$p$} (s3);
\draw (s4) edge[bend left=30] node {$q$} (s3);

\end{tikzpicture}
\end{figure}
This is a truncated version of the "selfish mining" random walk where states $\{n\}$ with $n>2$ have disapeared \cite{ES14}. State $\{0\}$ is the state when all miners mine on top of the same last block of the official blockchain. State $\{k\}$ for $k=1,2$ is the state when the attacker has just validated $k$ secret block(s) and is mining on top of his secret fork whereas the honest miners haven't found any block yet.
State $\{0'\}$ is the state when both the attacker and the honest miners have mined a block and are trying to mine one another on top of their last block. Each state transition gives a reward $r$ to the attacker and a contribution $h$ to the increase of the height of the official blockchain. A possible modelization for $(r,h)$ is the following. We have $(r,h)=(0,1)$ from $\{0\}$ to $\{0\}$, $(r,h)=(0,0)$ from $\{0\}$ to $\{1\}$, $(r,h)=(0,1)$ from $\{1\}$ to $\{0'\}$, $(r,h)=(0,1)$ from $\{0'\}$ to $\{0\}$ if the
honest miners find the new block (probability $p$) and $(r,h)=(2,1)$ otherwise (probability $q$). To simplify, the connectivity $\gamma$ is set $\gamma =0$. So, there are only two edges from $\{0'\}$ to $\{0\}$ depending on who mine the next block between the honest miners (probability $p$) and the attacker (probability $q$). We also have $(r,h)=(0,0)$ from $\{1\}$ to $\{2\}$ and $(r,h)=(2,2)$ from $\{2\}$ to $\{0 \}$ if the
honest miners find the new block (probability $p$) and $(r,h)=(3,3)$ otherwise (probability $q$). This is just one possible modelization. We could set for instance $(r,h)=(2,2)$ from $\{1\}$ to $\{2\}$ and $(r,h)=(0,0)$ from $\{2\}$ to $\{0 \}$ if the
honest miners find the new block (probability $p$) and $(r,h)=(1,1)$ otherwise (probability $q$). The contribution $r$ represents a number of blocks that will end soon or later in the official blockchain (this happens when the walker comes back to state 0). The only quantities of interest are $\sum_{i=1}^{\nu} r_i$ and $\sum_{i=1}^{\nu} h_i$ 
 where $\nu$ is the first returning time at $0$ and $r_i$ is the reward won by the attacker coming from the transition from state $\{i-1\}$ to state $\{i\}$. Similarly, $h_i$ is the contribution to the increase of the height of the official blockchain following a change of state from state $\{i-1\}$ to state $\{i\}$. The finite Markov chain has a stationary probability distribution $\pi$ which is given by $\pi(\{0 \})=\frac{1}{1+2q}$, $\pi(\{k \})=q.\pi(\{k-1 \})$ for $k>0$ and $\pi(\{0' \}=\frac{p q}{1+2q}$. From here, we find easily that $\EE[r]=\frac{q^2}{1+2 q}(4-q)$ and 
$\EE[h]=\frac{1+q+q^3}{1+2 q}$. Hence we get again $\Gamma=\frac{\EE[r]}{\EE[h]}=\frac{q^2(4-q)}{1+q+q^3}$ as in Proposition \ref{12}.
\end{note}

By enumerating the other possible strategies, one can show that the ``1+2'' strategy is the best possible when the miner’s attack cycle ends after the discovery of three blocks.

\section{Modified Bitcoin protocol}\label{mbp}

From Corollary \ref{cor:optimal} and Proposition \ref{prop:align} it appears that the origin of the problem comes from the difficulty adjustment formula that can be exploited by blockwithholding strategies. We can prevent these attacks using a different formula for the difficulty adjustment taking into account orphan blocks.

\subsection{A more general difficulty adjustment formula}

We consider a more general difficulty adjustment formula on the Bitcoin network of the form :
\begin{equation}
  \Delta' = \Delta \cdot \frac{D \times 10}{T}  \label{cible2}
\end{equation}
where $D$ denotes the progression of a certain quantity, called \textit{difficulty
function}, over a validation period of $n_0$ official blocks. 
The introduction of a difficulty function and the notation $D$ comes from \cite[p. 116]{BET20}. The authors use the term ``difficulty contribution''.
In the case of current Bitcoin protocol we simply have that $D$ grows linearly and at the end of the 
period $D = n_0 = 2016$. That is, 
the difficulty function for Bitcoin increases one by one with each
new block on the official blockchain. 

We can consider more general difficulty functions $D$. With the new formula, 
the mining time of a cycle is now
proportional to the progression of the difficulty function. The same argument
as before shows the following result 
(see also \cite{BET20}).

\begin{proposition}
  In the context of a modified Bitcoin protocol with a difficulty adjustment mechanism
  following relation (\ref{cible2}), the rate of return of a mining strategy
  is $\Gamma = \frac{\mathbb{E} [G]}{\mathbb{E} [D]}$ where $G$ is the number
  of blocks per cycle mined by the miner and added to the official blockchain
  and $D$ denotes the progression of the difficulty function over that cycle.
\end{proposition}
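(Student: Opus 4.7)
The plan is to recycle, essentially line for line, the argument that gave Proposition~\ref{gamma}, substituting the generalized difficulty function $D$ for the height $H$ everywhere it appeared. The only new input required is that the modified update rule (\ref{cible2}) still drives the difficulty parameter to an equilibrium at which one unit of $D$ corresponds on average to ten minutes of real time.

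First, I would set the scene exactly as before: the miner repeats a finite cycle, the total hashrate $h+h'$ is assumed constant, and the difficulty $\Delta$ is assumed to have converged to a value $\Delta^{\ast}$ which is a fixed point of (\ref{cible2}), i.e.\ $\Delta'=\Delta^{\ast}$. Plugging this fixed-point condition into (\ref{cible2}) gives $T=10\,D$ over every validation period of $n_0$ official blocks, so at the steady state one unit of $D$ costs on average ten minutes of real time. This is the exact analogue of the statement ``the interblock time is $10$ minutes'' used in the discussion preceding Proposition~\ref{gamma}, and it is the sole place where the form of the new difficulty adjustment intervenes.

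Second, I would pass from this period-averaged identity to the per-cycle identity $\mathbb{E}[\tau]=\mathbb{E}[D]$ (measured in units of $10$ minutes). Since the strategy is finite and the difficulty has stabilized, successive cycles are i.i.d.\ with $\mathbb{E}[\tau]<\infty$; any sensible difficulty function $D$ grows at most linearly in the number of blocks produced during a cycle, so $\mathbb{E}[D]<\infty$ as well. A direct application of Wald's Theorem, formally identical to the one the paper invokes just before Proposition~\ref{gamma} but with the counting process of height increments replaced by that of $D$-increments, then yields $\mathbb{E}[\tau]=\mathbb{E}[D]$.

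Substituting this into the definition of the profitability ratio finally gives
$$
\Gamma=\frac{\mathbb{E}[G]}{\mathbb{E}[\tau]}=\frac{\mathbb{E}[G]}{\mathbb{E}[D]},
$$
which is the claim. The only genuine obstacle is justifying that the difficulty does converge to a fixed point of (\ref{cible2}); this is exactly the same asymptotic issue already implicitly accepted in the classical case $D=H$, so no new probabilistic input beyond what the paper has already used is required.
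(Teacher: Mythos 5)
Your proposal is correct and follows essentially the same route as the paper, which simply asserts that ``the same argument as before'' (difficulty stabilization plus Wald's Theorem, as in the discussion preceding Proposition~\ref{gamma}) applies with $D$ in place of $H$. Your only addition is to make explicit the fixed-point computation $T=10\,D$ from (\ref{cible2}), which the paper leaves implicit.
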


Then we can consider a modification of the Bitcoin protocol, where miners, 
in addition to their mining activity, will report the presence of orphan
blocks by recording their proof of existence. It is
possible to encourage miners to record the existence of orphan blocks by
modifying the rule that defines the official blockchain and, as we will see later, 
even by rewarding the reporting of orphan blocks. The blockchain that maximizes the
difficulty is the one reporting the most number of orphan blocks weighted by their difficulty 
(during a difficulty 
adjustment period it will be the longest). At
the end of a validation period of $n_0$ official blocks, the adjustment of difficulty 
will be given by a similar formula as the standard one but of the form:
\begin{equation}
  \Delta'  =  \Delta \cdot \frac{(n_0 + n_1) \times 10}{T}  \label{n0n1}
\end{equation}
where $n_1$ is the number of orphaned blocks reported during the last
mining period of $2016$ blocks. That means that we consider a difficulty
function $D$ that is not given by the height function of the blockchain
but that increases by $1$ at each time that a new block is
registered in the official blockchain (whether this is an official block
or just an orphan block detected by the network). The honest mining strategy 
corresponds to a cycle that ends as soon as a block is discovered, which still
gives a profitability ratio equal to $q$.
Hence, we have the following Proposition:

\begin{corollary}
  Consider a finite mining strategy with a length cycle $\tau$ and $\mathbb{E} [\tau] <
  \infty$. The number of blocks added to the official blockchain by the
  miner (resp. the progression of the difficulty function) between $0$ and
  $\tau$ is $G (\tau)$ (resp. $D (\tau)$). This strategy is more profitable
  than the honest strategy if and only if 
  $$
  \mathbb{E} [G (\tau)] > q\mathbb{E}
  [D (\tau)]
  $$
\end{corollary}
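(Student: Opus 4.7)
The plan is to apply the preceding Proposition directly: under the modified protocol, the profitability ratio of any finite strategy equals $\Gamma = \mathbb{E}[G(\tau)]/\mathbb{E}[D(\tau)]$, so comparing it to the honest baseline amounts to a one-line manipulation.

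First I would recall, as noted in the paragraph preceding the statement, that the honest mining strategy corresponds to the elementary cycle that terminates as soon as any block is discovered (official or orphan). Under the modified difficulty function $D$, which counts every registered block (official plus reported orphan), such a cycle satisfies $\mathbb{E}[D]=1$ and $\mathbb{E}[G]=q$, so the honest profitability ratio equals $q$. Hence "more profitable than honest" simply means $\Gamma > q$.

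Second, I would apply the Proposition above to rewrite this inequality as $\mathbb{E}[G(\tau)]/\mathbb{E}[D(\tau)] > q$. Since $\mathbb{E}[\tau]<\infty$ and each cycle registers at least one block (so that the attacker returns to his starting configuration), we have $0 < \mathbb{E}[D(\tau)] < \infty$; this lets me clear the denominator without changing the sense of the inequality to obtain the equivalent form $\mathbb{E}[G(\tau)] > q\,\mathbb{E}[D(\tau)]$. The reverse implication is immediate by the same manipulation.

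There is no real obstacle here beyond bookkeeping: the content of the corollary is entirely packaged into the Proposition, and the only point requiring a brief justification is the strict positivity and finiteness of $\mathbb{E}[D(\tau)]$, which follows from the integrability hypothesis $\mathbb{E}[\tau]<\infty$ together with the fact that $D$ increments by one each time the network finds a block (either on the official chain or as a reported orphan), so $D(\tau)\geq 1$ almost surely.
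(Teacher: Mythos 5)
Your proposal is correct and follows essentially the same route as the paper: the corollary is an immediate consequence of the preceding Proposition ($\Gamma = \mathbb{E}[G(\tau)]/\mathbb{E}[D(\tau)]$) combined with the observation that the honest strategy's ratio is $q$, and the paper leaves exactly this one-line deduction implicit. Your added remark on the strict positivity and finiteness of $\mathbb{E}[D(\tau)]$ is a harmless extra precaution, not a departure from the paper's argument.
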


\subsection{Stability of the Nakamoto consensus with general difficulty adjustment formula}

Now assume that we are running the modified Bitcoin protocol with a difficulty adjustment formula as 
in (\ref{n0n1}) that takes into account orphan blocks.

\begin{theorem}\label{opt00}
  Let's assume that sooner or later orphaned blocks mined by honest miners will be detected, reported and recorded in the blockchain (nothing is assumed a priori about orphaned blocks mined by the attacker). Then, for any finite mining strategy  with $\mathbb{E} [\tau] < \infty$,
  we have
  \begin{equation*}  
    \mathbb{E} [G (\tau)] \leqslant q\mathbb{E} [D (\tau)]
  \end{equation*}

\end{theorem}

\begin{proof}
  During an attack cycle, among the $N' (\tau)$ blocks that are mined by
  the attacker we consider  $\Orph_A$, resp. $\Off_A$, the number of official, resp. orphan, blocks. Also, we denote $N (\tau)$ blocks
  mined by honest miners and among them $\Off_H$, resp. $\Orph_H$, the
  numbers of official resp. orphan, blocks mined by honest miners. We have
  \begin{align*}
    N (\tau) &= \Off_H + \Orph_H\\
    N' (\tau) &= \Off_A + \Orph_A
  \end{align*}
  and $G (\tau) = \Off_A$.
  
  Orphan blocks by honest miners are public and will be registered
  sooner or later in the official blockchain. Only the orphan blocks of the
  attacker can remain secret. Therefore we have,
  \[ \Off_A + \Off_H + \Orph_H \leqslant D (\tau) \]
  The two processes $N$ and $N'$ are Poisson processes of parameters $\lambda
  \cdot p$ and \ $\lambda \cdot q$ where $\lambda$ depends on the
  difficulty adjustment. If all miners are
  honest, then $\lambda = \frac{1}{\tau_0}$ with $\tau_0 = 10$ minutes. The condition 
  $\mathbb{E} [\tau] < \infty$ gives that $\mathbb{E} [N (\tau)] =
  \lambda p\mathbb{E} [\tau]$ and $\mathbb{E} [N' (\tau)] = \lambda
  q\mathbb{E} [\tau]$. This follows from the fact that if $M$ is a Poisson
  process of parameter $\alpha$ then the compensated process $M (t) - \alpha
  t$ is a martingale. Therefore,
  \[ p\mathbb{E} [\Off_A] \leqslant p\mathbb{E} [N' (\tau)] = p \lambda
     q\mathbb{E} [\tau] = q \lambda p\mathbb{E} [\tau] = q\mathbb{E} [N
     (\tau)] = q\mathbb{E} [\Off_H] + q\mathbb{E} [\Orph_H] \]
  which gives,
  \begin{align*}
    \mathbb{E} [G (\tau)] &= \mathbb{E} [\Off_A]\\
    &= p\mathbb{E} [\Off_A] + q\mathbb{E} [\Off_A]\\
    & \leqslant q\mathbb{E} [\Off_H] + q\mathbb{E} [\Orph_H] +
    q\mathbb{E} [\Off_A]\\
    & \leqslant q \cdot \mathbb{E} [D (\tau)]
  \end{align*}
  which proves the Theorem.
\end{proof}

\begin{corollary}\label{opt01}
  In the modified Bitcoin protocol, the most profitable strategy is always the honest one, and this 
  does not depend on the connectivity of the miner.
\end{corollary}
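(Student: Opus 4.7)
The plan is to derive the corollary essentially as a direct packaging of the preceding theorem together with the general formula $\Gamma = \mathbb{E}[G]/\mathbb{E}[D]$ established earlier for the modified protocol. So there is no new probabilistic content to extract; the work is to assemble three ingredients cleanly.

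First, I would invoke the proposition that, under the modified difficulty adjustment rule (\ref{cible2}) with the orphan-aware difficulty function, the profitability ratio of any finite mining strategy equals $\Gamma = \mathbb{E}[G(\tau)]/\mathbb{E}[D(\tau)]$. Second, I would quote the theorem just proved: for every finite strategy with $\mathbb{E}[\tau] < \infty$, one has $\mathbb{E}[G(\tau)] \leq q\,\mathbb{E}[D(\tau)]$. Dividing gives $\Gamma \leq q$ for every finite deviant strategy.

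Next I would verify that the honest strategy actually achieves the bound $q$. For honest mining the cycle ends as soon as some block is discovered by anyone on the network; in such a cycle $D$ increments by exactly $1$ (the new block is registered in the official blockchain, and no orphan is produced in this one-step cycle), while $G$ equals $1$ with probability $q$ (the attacker found the block) and $0$ otherwise. Hence $\mathbb{E}[G] = q$, $\mathbb{E}[D] = 1$, and $\Gamma_{\text{honest}} = q$. Combining with the inequality $\Gamma \leq q$ from the theorem, the honest strategy is (weakly) optimal, and any genuinely deviant strategy whose profitability strictly exceeds the honest one is ruled out.

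Finally I would point out that the connectivity of the attacker plays no role anywhere in the argument: the bound $\Off_A + \Off_H + \Orph_H \leq D(\tau)$ used in the theorem holds regardless of how well-connected the attacker is, since orphans produced by honest miners are broadcast and eventually recorded in the official blockchain under the modified protocol, while the Poisson identities $\mathbb{E}[N(\tau)] = \lambda p\,\mathbb{E}[\tau]$ and $\mathbb{E}[N'(\tau)] = \lambda q\,\mathbb{E}[\tau]$ depend only on the relative hash powers $p,q$. The only subtlety worth flagging in the write-up is making explicit that $\Orph_A$ does not need to be bounded (the attacker may suppress his own orphans), and that this is precisely why the inequality $\Off_A + \Off_H + \Orph_H \leq D(\tau)$ is the right one; this is the main (minor) obstacle, but it is already handled inside the theorem's proof, so the corollary itself is a one-line consequence.
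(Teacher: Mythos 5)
Your derivation is correct and matches the paper's intended (implicit) argument exactly: the corollary is just the theorem's inequality $\mathbb{E}[G(\tau)] \leq q\,\mathbb{E}[D(\tau)]$ combined with the formula $\Gamma = \mathbb{E}[G]/\mathbb{E}[D]$ and the observation that honest mining attains $\Gamma = q$, with connectivity never entering the estimates. The paper leaves the corollary unproved as an immediate consequence, and your assembly of the three ingredients is precisely what is intended.
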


\begin{note}
Orphan blocks mined by the attacker are not necessarily recorded by the official blockchain. However, it is essential that all orphan blocks mined by honest miners are recorded. If this is not the case, the theorem may be invalidated. For example, in \cite{SZNP2024}, the authors consider difficulty adjustment proposals that sometimes allow avoiding the reporting of orphan blocks. This is particularly the case in their proposals if an orphan block is mined in a different period than the official block that reports it. Under these conditions, Theorem \ref{opt00} does not hold.
\end{note}

We can be more precise. We have indeed:
\begin{eqnarray}
  p\mathbb{E} [\Off_A] + p\mathbb{E} [\Orph_A] & = & p\mathbb{E}
  [N' (\tau)] \nonumber\\
  & = & p \lambda q\mathbb{E} [\tau] \nonumber\\
  & = & q \lambda p\mathbb{E} [\tau] \nonumber\\
  & = & q\mathbb{E} [N (\tau)] \nonumber\\
  & = & q\mathbb{E} [\Off_H] + q\mathbb{E} [\Orph_H]  \label{pre}
\end{eqnarray}
Among the $\Orph_A$ orphan blocks of the attacker, we note
$\Orph'_A$ the orphan blocks made public by the attacker and thus
detected by the network at a given moment (the others remain secret).

We have, on the one hand
\[ \mathbb{E} [D (\tau)] =\mathbb{E} [\Off_H] +\mathbb{E}
   [\Orph_H] +\mathbb{E} [\Off_A] +\mathbb{E} [\Orph'_A] \]
and on the other hand according to (\ref{pre}),
\begin{eqnarray*}
  \mathbb{E} [\Off_A] +\mathbb{E} [\Orph'_A] & = & p\mathbb{E}
  [\Off_A] + p\mathbb{E} [\Orph'_A] + q\mathbb{E} [\Off_A] +
  q\mathbb{E} [\Orph'_A]\\
  & \leqslant & p\mathbb{E} [\Off_A] + p\mathbb{E} [\Orph_A] +
  q\mathbb{E} [\Off_A] + q\mathbb{E} [\Orph'_A]\\
  & \leqslant & q\mathbb{E} [\Off_H] + q\mathbb{E} [\Orph_H] +
  q\mathbb{E} [\Off_A] + q\mathbb{E} [\Orph'_A]\\
  & \leqslant & q \cdot \mathbb{E} [D (\tau)]
\end{eqnarray*}
Now assume that the protocol grants a reward $x$ to any orphan block creator
with $x \leq 1$. Then, \
\begin{eqnarray*}
  G (\tau) & = & \Off_A + x \cdot \Orph'_A\\
  & \leqslant & \Off_A + \Orph'_A
\end{eqnarray*}
So, from the above,
\begin{eqnarray*}
  \mathbb{E} [G (\tau)] & \leqslant & \mathbb{E} [\Off_A] +\mathbb{E}
  [\Orph'_A]\\
  & \leqslant & q \cdot \mathbb{E} [D (\tau)]
\end{eqnarray*}
Thus we have proved the more general result:

\begin{theorem}\label{2nd}
  Consider a modified Bitcoin protocol that grants a coinbase fraction reward $0\leq x\leq 1$ for each block 
  to orphan block creators. We assume that the difficulty adjustment mechanism is given by (\ref{n0n1}). 
  Then the honest mining strategy is optimal. When $x < 1$ this is the only optimal strategy.
\end{theorem}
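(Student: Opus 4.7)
The plan is to refine the argument of the preceding theorem to accommodate the extra reward $x$ for publicly reported orphan blocks. The only probabilistic input is the balance relation
\begin{equation*}
p \bigl(\mathbb{E}[\Off_A] + \mathbb{E}[\Orph_A]\bigr) = q \bigl(\mathbb{E}[\Off_H] + \mathbb{E}[\Orph_H]\bigr),
\end{equation*}
which comes from Doob's optional stopping applied to the compensated Poisson martingales $N(t) - \lambda p t$ and $N'(t) - \lambda q t$ using $\mathbb{E}[\tau] < \infty$. Everything else reduces to bookkeeping on the variables $\Off_H, \Orph_H, \Off_A, \Orph_A, \Orph'_A$, where $\Orph'_A \leq \Orph_A$ counts the attacker's publicly broadcast orphans (the rest remain secret and are therefore invisible both to the reward rule and to the difficulty function).

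For the existence part $\Gamma \leq q$, I would write $G(\tau) = \Off_A + x \cdot \Orph'_A$ (only network-verifiable orphans carry the fractional reward) and $D(\tau) = \Off_H + \Orph_H + \Off_A + \Orph'_A$ (only network-visible blocks contribute to formula (\ref{n0n1})), then use $x \leq 1$ to bound $G(\tau) \leq \Off_A + \Orph'_A$. After that I would reproduce the three-line computation sketched in the excerpt just before the theorem: split $\mathbb{E}[\Off_A] + \mathbb{E}[\Orph'_A]$ as $p(\cdot) + q(\cdot)$, enlarge $\Orph'_A$ to $\Orph_A$ in the $p$-term, apply the balance relation to rewrite $p(\mathbb{E}[\Off_A] + \mathbb{E}[\Orph_A])$ as $q(\mathbb{E}[\Off_H] + \mathbb{E}[\Orph_H])$, and recombine to recognize $q \mathbb{E}[D(\tau)]$. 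Since the honest strategy realizes $\Gamma = q$, optimality follows.

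The new content over the previous theorem is the uniqueness claim for $x < 1$, and this is where I expect the main obstacle. The strategy is to trace the chain of inequalities backward and locate each point at which strictness can appear. The bound $G(\tau) \leq \Off_A + \Orph'_A$ is strict on $\{\Orph'_A > 0\}$ as soon as $x < 1$, so any optimal strategy must satisfy $\mathbb{E}[\Orph'_A] = 0$; the enlargement $\Orph'_A \leq \Orph_A$ is strict on $\{\Orph_A > \Orph'_A\}$, so optimality further forces $\mathbb{E}[\Orph_A] = 0$. Hence almost surely $\Orph_A = 0$ in every cycle: the attacker produces no orphan blocks, secret or public. The delicate step is to translate this almost-sure identity back into strategic terms, namely to argue that any systematic deviation from mining on top of the official tip with immediate broadcast creates a positive-probability event in which an attacker block is overtaken by an honest fork and becomes orphan; this identification of $\mathbb{E}[\Orph_A] = 0$ with the honest strategy itself is the point that requires the most care.
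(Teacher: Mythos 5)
Your derivation of the inequality $\mathbb{E}[G(\tau)] \leq q\,\mathbb{E}[D(\tau)]$ is exactly the paper's argument: the same optional-stopping balance relation $p(\mathbb{E}[\Off_A]+\mathbb{E}[\Orph_A]) = q(\mathbb{E}[\Off_H]+\mathbb{E}[\Orph_H])$, the same decomposition $\mathbb{E}[D(\tau)] = \mathbb{E}[\Off_H]+\mathbb{E}[\Orph_H]+\mathbb{E}[\Off_A]+\mathbb{E}[\Orph'_A]$, the same $p+q$ split with the enlargement $\Orph'_A \leq \Orph_A$ in the $p$-term, and the same use of $x\leq 1$ to absorb the fractional reward. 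So the optimality half is a match.

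On the uniqueness claim for $x<1$ you actually go further than the paper, which asserts it without any written justification. Your equality-tracing is correct: $x<1$ forces $\mathbb{E}[\Orph'_A]=0$ at the reward step, and equality in the enlargement then forces $\mathbb{E}[\Orph_A]=0$, so an optimal attacker produces no orphan blocks almost surely. You are also right to flag the remaining step --- identifying ``$\Orph_A=0$ a.s.\ in every cycle'' with ``the strategy is honest'' --- as the delicate point; neither you nor the paper closes it, and doing so honestly would require a formal definition of a mining strategy (which the paper never gives) together with the observation that, in this Poisson model with instantaneous propagation, any withholding or mining off the official tip creates a positive-probability event in which an attacker block is overtaken and orphaned. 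Your reduction is the right one and is as complete as the source; just be aware that the last step is a modeling argument rather than a computation, and that strictly speaking what you prove is that every optimal strategy is orphan-free, which is the precise content behind the theorem's uniqueness assertion.
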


\begin{remark}
The result of Theorem \ref{2nd} may seem counter-intuitive. Granting a reward to the creators of orphaned blocks seems like an incentive to carry out block-withholding attacks, since even when the attack cycle fails, the dishonest miner will be partially rewarded for creating his orphaned blocks. At the same time, however, the inclusion of these orphan blocks in the new difficulty adjustment algorithm means that the mining difficulty parameter is no longer artificially lowered. Per attack cycle, the average revenue $\EE[R]$ increases, as does the difficulty function $\EE[D]$, and the calculation of the profitability ratio $\frac{\EE[R]}{\EE[D]}$ shows that it is lower than its value in the case where the miner mines honestly. Note also that the result of Theorem \ref{2nd} is purely theoretical and in no way encourages to reward the creators of orphan blocks. As we have seen in Theorem \ref{opt00} and Corollary \ref{opt01}, the result is true even if no reward is given ($x=0$). The main shortcoming of the idea of remunerating orphan block creators is that it creates monetary inflation, which changes the 21 million limit if nothing else is modified. However, we can also adjust the rule for the timing of the halving by triggering it when the total emission equals that of a classical halving period. In this case, the 21 million limit of bitcoins is preserved.
\end{remark}

To illustrate this theorem, we can revisit the example of the ``1+2" strategy.
\begin{example}
In the case of the ``1+2" strategy, we have 
\[ D (B) = 1, D (ABB) = 2, D (AAA) = D (AAB) = D(ABA) = 3 \] and we check that 
$\mathbb{E} [G] -q \mathbb{E} [D]  = -p^3 q<0$. Note that if the attacker modifies his strategy and decides to publish his orphan block (the ``A" in the sequence $ABB$), then in this case we have:
\[ G (B) = 0, G (ABB) = x, G (AAA) = 3, G (AAB) = G
   (ABA) = 2 \]
and
\[ D (B) = 1, D (ABB) = D (AAA) = D (AAB) = D(ABA) = 3 \]
Similarly, we check that
$\mathbb{E} [G] -q \mathbb{E} [D]  = -p^2 q (1-x)\leq 0$.
\end{example}

\section{Conclusion}
Regularly, miners add blocks to a rooted graph.
Block counting processes are Poisson processes.
The official blockchain is in practice the longest sequence of blocks that can be found there starting from the genesis block.
Only blocks that are part of the official blockchain give rise to rewards.
In the absence of a Difficulty Adjustment Algorithm (DAA), the best mining strategy is the honest strategy which consists in always mining on the last block of the official blockchain and always immediately revealing block discoveries. This is no longer the case with the current DAA implemented in Bitcoin today. In this case and in the long term, the profitability ratio of a mining strategy is the proportion of mined blocks present in the official blockchain. In some cases, block withholding strategies outperform the honest strategy. This is sometimes the case with the 1+2 strategy that we have described here and which is extremely simple. However, it suffices to slightly modify the DAA by taking into account the production of orphan blocks to make these attacks ineffective because they are not profitable.
It is even possible to partially remunerate the creators of orphan blocks without jeopardizing the security of the protocol.
The two general results presented here (Theorem \ref{1st} and Theorem \ref{2nd}) are direct applications of Doob's stopping time theorem. However, proving these theorems rigorously seems impossible without the use of martingale theory.


\begin{thebibliography}{1}

  \bibitem[1]{B13}L.~Bahack. \textit{Theoretical bitcoin attacks with less than half of the computational power (draft).} arXiv:1312.7013, 2013.
  
  \bibitem[2]{BET20}R.~Bar-Zur, I.~Eyal, A.~Tamar. \textit{Efficient MDP analysis for Selfish-Mining in Blockchains.} Proceedings of the 2nd ACM Conference on Advances in Financial Technologies, 2020.
  
  \bibitem[3]{DMPS20}R.~Douc, E.~Moulines, P~Priouret, P~Soulier. \textit{Markov Chains.} Springer, 2020.  
  
  \bibitem[4]{ES14}I.~Eyal, E.~Sirer. \textit{Majority is not
  enough: Bitcoin Mining is Vulnerable.} International
  Conference on Financial Cryptography and Data Security, 2014.
  
  \bibitem[5]{GZ19}G.~Georgiadis, D.~Zeilberger. \textit{A combinatorial-probabilistic analysis of bitcoin attacks.} Journal of Difference Equations And Applications,  2019.  
      
  \bibitem[6]{GPM17}C.~Grunspan  and  R.~P\'erez-Marco. \textit{Double Spend Races}, International Journal of Theoretical and Applied Finance, Vol. 21, 2018.Also ArXiv: 1702.02867, 2017.
  
  \bibitem[7]{GPM2018}C.~Grunspan  and  R.~P\'erez-Marco. \textit{On Profitability of Selfish Mining}, ArXiv:1805.08281, 2018.
    
  
  \bibitem[8]{GPM2020}C.~Grunspan  and  R.~P\'erez-Marco. \textit{The Mathematics of Bitcoin}, Newsletter 
  of the European Mathematical Society, p. 31, March 2020. 
  
  \bibitem[9]{GPM2023}C.~Grunspan  and  R.~P\'erez-Marco. \textit{Profit Lag and Alternate Network Mining.} MARBLE Conference, Imperial College London, 2023.  
  
  \bibitem[10]{HG22}A.~Hansjoerg  and  P.-O.~Goffard. \textit{On the profitability of selfish blockchain mining under consideration of ruin}, Operations Research, Vol 70, issue 1, pp. 179--200, 2022. 

  \bibitem[11]{L22}Q-L~Li and Y-X~Chang and Q~Wang, \textit{Performance Evaluation, Optimization and Dynamic Decision in Blockchain Systems: A Recent Overview}, Blockchain, Vol 1, issue 1, 2023.

  \bibitem[12]{N16}K.~Nayak, S~Kumar, A~Miller. \textit{Stubborn mining: Generalizing selfish mining and combining with an eclipse attack}, IEEE European Symposium on Security and Privacy, 2016.
 
  \bibitem[13]{N08}S.~Nakamoto. \textit{Bitcoin: a Peer-To-Peer Electronic
  Cash System}, Bitcoin.org, 2008.

 
  \bibitem[14]{Ross}S.~Ross. \textit{Introduction to Probability Models}, Academic Press, 12th edition,  Chapter 5, May 2019. 
  
  
  \bibitem[15]{SSZ17}A.~Sapirshtein, Y.~Sompolinsky, A.~Zohar. \textit{Optimal Selfish Mining Strategies in Bitcoin.} International
  Conference on Financial Cryptography and Data Security, 2017.

    \bibitem[16]{SZNP2024}R.~Sarenche, R.~Zhang, S.~Nikova, B.~Preneel. \textit{Selfish Mining Time-Averaged Analysis in Bitcoin: Is Orphan Reporting an Effective Countermeasure?}, IEEE Transactions on Information Forensics and Security, Vol. 20, pp. 449--464, 2024.
  
    \bibitem[17]{Warren2023}M.~Warren. \textit{Bitcoin: A Game-Theoretic Analysis}, De Gruyter Graduate, 2023.  
 
  \bibitem[18]{Zhang}R.~Zhang. \textit{Analyzing and Improving Proof-of-Work Consensus Protocols}, PhD thesis, KU Leuven, B. Preneel (promotor), 2019.
  
  \bibitem[19]{Zhang23}R.~Zhang, D.~Zhang, Q.~Wang, S.~Wu, J.~Xie and B.~Preneel. \textit{NC-Max: Breaking the Security-Performance Tradeoff in Nakamoto Consensus}, The Network and Distributed System Security Symposium (NDSS) 2022.
  
  \bibitem[20]{Zhou23}C.~Zhou and X.~Liudong and L.~Qisi. \textit{Effective Selfish Mining Defense Strategies to Improve Bitcoin Dependability}, Applied Sciences 13, no. 1: 422, 2023.  
   

\end{thebibliography}
\end{document}